\newtheorem{thm}{Theorem}
\newtheorem{cor}[thm]{Corollary}
\def \R {\mathbb{R}}
\def \y {\mathbf{y}}
\def \x {\mathbf{x}}
\def \L {\mathcal{L}}
\def \S {\mathcal{S}}
\def \xh {\widehat{\x}}
\def \B {\mathcal{B}}
\def \R {\mathbb{R}}
\def \A {\mathcal{A}}
\def \B {\mathcal{B}}
\def \nh {\widehat{\nabla}}
\def \Sb {\overline{\S}}
\title{A New Analysis of Compressive Sensing by\\
Stochastic  Proximal Gradient Descent}
\begin{document}
\maketitle

\begin{abstract}
In this manuscript, we analyze the sparse signal recovery (compressive sensing) problem from the perspective of convex  optimization by stochastic proximal gradient descent. This view allows us to significantly simplify the recovery analysis of compressive sensing. More importantly, it leads to an efficient optimization algorithm for solving the regularized optimization problem related to the sparse recovery problem. Compared to the existing approaches, there are two advantages of the proposed algorithm. First, it enjoys a geometric convergence rate and therefore is computationally efficient. Second, it guarantees that the support set of any intermediate solution generated by the proposed algorithm is concentrated on the support set of the optimal solution. 

\end{abstract}

\section{Introduction and Related Work}
The problem of sparse signal recovery is to reconstruct a sparse signal given a number of linear measurements of the signal.  The problem has been studied extensively under  two closely related settings, i.e., lasso~\cite{} and compressive sensing~\cite{}. Lasso is known as a tool of model selection that aims to learn a sparse model $\beta\in\mathbb R^d$ from a data design matrix $X\in\mathbb R^{n\times d}$ and noisy measurements $\mathbf y=X\beta +\varepsilon$ of $\beta$ ,  where $\varepsilon$ are zero-mean independent Gaussian random variables, by solving the $\ell_1$ regularized least square problem $\min_{\beta\in\R^d}\|\y-X\beta\|^2_2+\lambda\|\beta\|_1$.  Compressive sensing focuses more on the study of how many random measurements are needed to optimally recover a sparse signal $\x_*\in\mathbb R^d$. In the manuscript, we provide a new perspective of compressive sensing from the viewpoint of  convex optimization by gradient descent. Our analysis reveals that in order to solve the optimal recovery problem of $\min_{\x\in\mathbb R^d}\frac{1}{2}\|\x-\x_*\|_2^2$ in hindsight by a gradient descent method, the random measurements of the signal $\x_*$ denoted by  $U\x_*$  are used for computing  a stochastic gradient of the objective. Furthermore, we develop a stochastic gradient descent method that  solves a composite gradient mapping with $\ell_1$ regularization at each iteration, which ensures the support set of intermediate solution concentrates on the support set of the optimal solution. Finally, we prove that the proposed algorithm enjoys a geometric convergence rate.  To the best of our knowledge, this work is the first that analyze the compressive sensing in the angle of optimization by stochastic gradient descent. 

A great volume of work have been devoted to the problem of sparse signal recovery in different philosophies. In the following, we briefly review some related work that solves  the optimization problem for reconstructing the optimal signal with a linear (i.e., geometric)  convergence rate.     In~\citep{bredies-2008-linear,hale-2008-fixed}, the authors established linear convergence rates as the iterates are close enough to the optimum. \citet{tropp-2007-signal} showed that if an algorithm can quickly identify the support set of the optimal solution, then the optimization is effectively reduced to a lower-dimensional subspace, and geometric convergence can be achieved.  \citet{garg-2009-gradient}  showed a geometric convergence rate for the recovered solution by a sparsification. 
In~\citep{agarwal-2011-fast}, the authors showed that a simple gradient descent algorithm for the constrained Lasso can achieve a global geometric convergence rate in recovering the target solution (Corollary 2)~\footnote{In the same paper, the authors also discussed a gradient descent algorithm for the regularized Lasso, which unfortunately is only able to recover the solution up to the statistical tolerance.}. One shortcoming with the analysis in~\citep{agarwal-2011-fast} is that the parameter $\kappa$ in linear convergence is lower bounded by a constant (i.e., $3/4$) independent from the number of random measurements, a disappointing feature as we expect a faster convergence with the increasing number of random measurements.

The proposed approach is similar to several existing algorithms~\citep{wen-2010-fast,wright-2009-sparse,hale-2008-fixed,xiao-2012-proximal} developed for $\ell_1$ regularized minimization in that all of them solve the regularized optimization problem by gradually shrinking the value of the regularization parameter. To the best of our knowledge, ~\citep{xiao-2012-proximal} is the only work in this direction that provides theoretical guarantee. The main difference between this work and the work~\citep{xiao-2012-proximal} is that instead of performing a simple gradient mapping for each value of the regularized parameter, the algorithm~\citep{xiao-2012-proximal} requires, at each iteration, solving the $L1$ regularized optimization problem to certain accuracy, leading to a significant computational overhead in optimization.

\section{Algorithm}

Let $\x_* \in \R^d$ be a $s$-sparse high dimensional signal to be recovered, where the number of non-zero elements in $\x_*$ is $s$. We denote by $S(\x)$ the support set for $\x$ that includes all the indices of the non-zero entries in $\x$, i.e.,
\begin{eqnarray}
    S(\x) = \left\{i\in[d]: [\x]_i \neq 0 \right\} \label{eqn:S}
\end{eqnarray}
where $[d]$ denotes the set $\{1,\ldots, d\}$ and $[\x]_i$ denote the $i$-th element in $\x$.  We also denote by $\Sb(\x) = [d]\setminus \S(\x)$ the complementary set of $\S(\x)$. In particular, we use $\S_*, \Sb_*$ to denote the support set and complementary set of $\x_*$.  Similar to most of the previous analysis, we assume that $\|\x_*\|_2 \leq R$. 

To motivate our approach, we first consider the following optimization problem
\begin{eqnarray}
    \min\limits_{\x\in\mathbb R^d} \quad \L(\x) = \frac{1}{2}\|\x - \x_*\|_2^2 \label{eqn:1}
\end{eqnarray}
Evidently, the optimal solution to (\ref{eqn:1}) is $\x_*$. We now consider a gradient descent method for optimizing the problem in (\ref{eqn:1}), leading to the following updating equation for $\x_t$
\begin{eqnarray}
    \x_{t+1} = \mathop{\arg\min}\limits_{\x\in\R^d} \left\|\x -( \x_{t} -  \nabla \L(\x_t))\right\|_2^2 \label{eqn:update-1}
\end{eqnarray}
where $\nabla \L(\x) = \x - \x_*$. Since the problem in (\ref{eqn:1}) is both smooth and strongly convex, the above updating enjoys a geometric convergence rate~\footnote{In fact, only one step is needed.}, allowing an efficient reconstruction of $\x_*$.

However, the updating rule in (\ref{eqn:update-1}) can not be used because it requires knowing $\x_*$, the full information of the sparse signal to be recovered. In compressive sensing, the only available information about the target signal $\x_*$ is its random measurements. More specifically, let $U \in \R^{m\times d}$ be a random measurement matrix and $\y=U\x_*$ be the corresponding $m$ random measurements. Using the random measurements, we construct an approximate gradient as
\begin{equation}
    \nh \L(\x_t) = U^{\top}U(\x_t - \x_*) = U^{\top}(U\x_t - \y) \label{eqn:gradient}
\end{equation}
To ensure $\nh\L(\x_t)$ provide an useful estimate  of $\nabla\L(\x_t)$, we assume the random measurement matrix $U$ satisfies the following restricted isometry properties (RIP)  (with an overwhelming probability). 
\begin{definition}[$s$-restricted isometry constant]
Let  $\delta_s\geq 0$ be the smallest constant  such that for any subset $\mathcal T\in[d]$ with $|\mathcal T|\leq s$ and $\x\in\R^{|\mathcal T|}$,
\begin{align*}
(1-\delta_s)\|\x\|_2^2\leq \|U_{\mathcal T}\x\|_2^2\leq (1+\delta_s)\|\x\|_2^2
\end{align*}
where $U_{\mathcal T}$ denote the sub-matrix of $U$ with columns from $\mathcal T$. 
\end{definition}
\begin{definition}[$s,s$-restricted orthogonality  constant]
Let  $\theta_{s,s}$ be the smallest constant  such that for any two disjoint  subsets $\mathcal T, \mathcal T'\in[d]$ with $|\mathcal T|\leq s$, $|\mathcal T'|\leq s$, $2s\leq d$, and for any $\x\in\R^{|\mathcal T|}$,  $\x'\in\R^{|\mathcal T'|}$, 
\begin{align*}
|\langle U_{\mathcal T}\x,  U_{\mathcal T'}\x'\rangle| \leq \theta_{s,s}\|\x\|_2\|\x'\|_2
\end{align*}
\end{definition}
The above two constants are standard tools in the analysis of optimal recovery of compressive sensing. It has been shown that~\cite{}  several random measurement matrix including Gaussian measurement matrix, binary measurement matrix, Fourier measurement matrix and incoherent measurement matrix satisfy the above RIP  with small $\delta_s$ and $\theta_{s,s}$.  

Next, we will use $\nh\L(\x_t)$ as an approximation of $\nabla \L(\x_t)$ and update the solution by performing the following  proximal  mapping: 
\begin{eqnarray}
    \x_{t+1} = \mathop{\arg\min}\limits_{\x \in \R^d} \tau_t \|\x\|_1 + \langle\x - \x_t, \nh \L(\x_t) \rangle + \frac{1 + \gamma}{2}\|\x - \x_t\|_2^2 \label{eqn:update-2}
\end{eqnarray}
where $\tau_t > 0$ is the regularization parameter that varies over the iterations and $\gamma > 0$ is a parameter essentially due to the RIP conditions. 
The updating rule given in (\ref{eqn:update-3}) differs from (\ref{eqn:update-1}) in that (i) the true gradient $\nabla\L(\x_t)$ is replaced with an approximate gradient $\nh\L(\x_t)$ and (ii) a  $\ell_1$ regularization term $\tau_t\|\x\|_1$ is added. With appropriate choice of $\tau_t$, this regularization term will essentially remove the noise arising from the approximate  gradient and consequentially lead to the geometric convergence rate.

\paragraph{Remark:} We note that our approach is fundamentally different from the classical idea of stochastic gradient descent. In stochastic gradient descent, we have access to the stochastic oracle of the gradients. By drawing an unbiased estimate of the gradient independently from the statistical oracle at each iteration, stochastic gradient descent is able to reduce the noise in the stochastic gradients through the average by exploring the concentration inequality of martingales. In contrast, in compressive sensing, we are only provided with {\it one} set of random measurements for the target signal  $\x_*$. Since all the estimates of gradients are based on the same set of random measurements, they are statistically dependent, making it impossible to explore the martingale technique for reducing the noise in the estimates of gradients. The $\ell_1$ regularization term in the updating rule in (\ref{eqn:update-2}) is essentially introduced to reduce the noise in the statistical gradients, and therefore plays similar role as the concentration inequality of martingales.

To give the solution of $\x_{t+1}$ in a closed form, we write~(\ref{eqn:update-2}) as 
\begin{eqnarray}
    \x_{t+1} = \mathop{\arg\min}\limits_{\x \in \R^d} \frac{1}{2}\left\|\x - \left(\x_t-\frac{1}{1+\gamma}\nh\L(\x_t)\right)\right\|_2^2 + \frac{\tau_t}{1+\gamma}\|\x\|_1 \label{eqn:update-3}
\end{eqnarray}  According to~\cite{}, the value of $\x_{t+1}$ is given by 
\begin{equation}\label{eqn:xt}
\x_{t+1} = sign(\xh_t)\left[|\xh_t| - \frac{\tau_t}{1+\gamma}\right]_+
\end{equation}
where $\xh_t = \x_t - (1/(1+\gamma))\nh\L(\x_t)$ and $[v]_+=\max(0, v)$.  
We present the detailed steps of the proposed approach in   Algorithm~\ref{alg:1}  for reconstructing the sparse signal given a set of random measurements. 
\begin{algorithm}[t] 
\caption{A Composite Optimization Approach for Compressive Sensing}
\begin{algorithmic}[1]

\STATE {\bf Input:} Gaussian random matrix $U \in \R^{d\times m}$, random measurements $\y = U^{\top}\x_*$, regularization parameters $\tau_1, \ldots, \tau_T$, and $\gamma$

\STATE {\bf Initialize} $\x_1 = 0$.
\FOR{$t=1, \ldots, T$}
\STATE Compute $\displaystyle\xh_{t}= \x_t -\frac{1}{1+\gamma}U(U^{\top}\x_t - \y)$
    \STATE Update the solution $\displaystyle \x_{t+1} = sign(\xh_t)\left[|\xh_t| - \frac{\tau_t}{1+\gamma}\right]_+$
    \ENDFOR

\STATE {\bf Output} the final solution $\x_{T+1}$
\end{algorithmic}
\label{alg:1}
\end{algorithm}
To end this section, we present our main result in the following theomrem which states the theoretical guarantee of Algorithm~\ref{alg:1}. 
\begin{thm} \label{thm:main}
Let $\x_*\in\mathbb R^d$ be a $s$-sparse signal and $\y=U\x_*$ be a set of $m$ random measurements of $\x_*$. Set $\gamma, \tau_t$ in Algorithm~\ref{alg:1} as
\[
\gamma= \max(\delta_{3s}, \theta_{s,s} + \delta_s), \quad \tau_t =\frac{\theta_{s,s}+\delta_s + \gamma}{\sqrt{s}} (4\gamma)^{(t - 1)/2}R, t=1, \ldots, T.
\]
If we assume $\gamma\leq 1/4$, then  (i) $\|\S_t\cup\S_*\|\leq 2s$  and (ii) $\|\x_{t} - \x_*\|_2 \leq (4\gamma)^{(t-1)/2} \|\x_*\|_2$ , and (iii) $\|\x_{t} - \x_*\|_1 \leq \sqrt{s}(4\gamma)^{(t-1)/2} \|\x_*\|_2, t=1, \ldots, T$ 
\end{thm}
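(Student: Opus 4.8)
The plan is to induct on $t$, establishing (i), (ii), (iii) simultaneously. For the base case $t=1$, since $\x_1=0$ the error $\d_1:=\x_1-\x_*=-\x_*$ is supported on $\S_*$, so $\S_1\cup\S_*=\S_*$ has size $s\le 2s$, $\|\d_1\|_2=\|\x_*\|_2$ matches (ii), and $\|\d_1\|_1=\|\x_*\|_1\le\sqrt s\,\|\x_*\|_2$ by Cauchy--Schwarz. For the inductive step I assume (i)--(iii) at $t$, write $\d_t=\x_t-\x_*$ (supported on $T:=\S_t\cup\S_*$, $|T|\le 2s$), and record that $\x_{t+1}$ is the soft-threshold of $\xh_t=\x_t-\frac1{1+\gamma}U^\top U\d_t$ at level $\lambda:=\tau_t/(1+\gamma)$. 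I would prove (i) at $t+1$ first, since it is what justifies the low-dimensional reduction used in (ii).

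For the support bound, the clean observation is that for every $i\in\Sb_*$ one has $[\xh_t]_i=[\g_t]_i$ with $\g_t:=(I-\frac1{1+\gamma}U^\top U)\d_t$, so $i\in\S_{t+1}$ iff $|[\g_t]_i|>\lambda$. I would split $\g_t$ on $\Sb_*$ into a ``signal'' part $(\d_t)_{\Sb_*}$, supported on $\S_t\setminus\S_*$ (at most $s$ coordinates by (i) at $t$), and a ``leakage'' part $\frac1{1+\gamma}(U^\top U\d_t)_{\Sb_*}$. The leakage-driven survivors are bounded by a counting argument: for any $B'\subseteq\Sb_*$ disjoint from $T$ with $|B'|\le s$, splitting $T$ into two blocks of size $\le s$ and applying restricted orthogonality gives $\|U_{B'}^\top U_T\d_{t,T}\|_2\le\sqrt2\,\theta_{s,s}\|\d_t\|_2$, so if more than $s$ off-support coordinates exceeded $\lambda$ we would obtain $\sqrt s\,\tau_t<\sqrt2\,\theta_{s,s}\|\d_t\|_2$, contradicting $\tau_t\ge\frac{\theta_{s,s}+\delta_s+\gamma}{\sqrt s}\|\d_t\|_2$ (valid since $\gamma\ge\theta_{s,s}$ and $\|\d_t\|_2\le(4\gamma)^{(t-1)/2}R$ by (ii)). This yields $|\S_{t+1}\cup\S_*|\le 2s$, and hence $|\S_t\cup\S_{t+1}\cup\S_*|\le 3s$.

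With this $3s$-support reduction I would prove (ii). The soft-threshold optimality condition $\xh_t-\x_{t+1}\in\lambda\,\partial\|\x_{t+1}\|_1$, tested against $\x_*$, gives $\langle\x_{t+1}-\xh_t,\d_{t+1}\rangle\le\lambda(\|\x_*\|_1-\|\x_{t+1}\|_1)$ with $\d_{t+1}=\x_{t+1}-\x_*$. Substituting $\x_{t+1}-\xh_t=\d_{t+1}-\d_t+\frac1{1+\gamma}U^\top U\d_t$ and rearranging yields
\[ \|\d_{t+1}\|_2^2\le\big\langle\d_t,(I-\tfrac1{1+\gamma}U^\top U)\d_{t+1}\big\rangle+\lambda\big(\|\x_*\|_1-\|\x_{t+1}\|_1\big). \]
Both $\d_t,\d_{t+1}$ are supported on $W:=\S_t\cup\S_{t+1}\cup\S_*$ with $|W|\le 3s$, so the first term is at most $\frac{\gamma+\delta_{3s}}{1+\gamma}\|\d_t\|_2\|\d_{t+1}\|_2\le 2\gamma\|\d_t\|_2\|\d_{t+1}\|_2$ because $\delta_{3s}\le\gamma$, while the second is at most $\lambda\|(\d_{t+1})_{\S_*}\|_1\le\lambda\sqrt s\,\|\d_{t+1}\|_2$ using $\|\x_*\|_1-\|\x_{t+1}\|_1\le\|(\d_{t+1})_{\S_*}\|_1$. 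Dividing by $\|\d_{t+1}\|_2$ gives the recursion $\|\d_{t+1}\|_2\le 2\gamma\|\d_t\|_2+\lambda\sqrt s$, and since $\lambda\sqrt s=\frac{\theta_{s,s}+\delta_s+\gamma}{1+\gamma}(4\gamma)^{(t-1)/2}R$ with $\theta_{s,s}+\delta_s\le\gamma$, the effective rate $2\gamma+\frac{\theta_{s,s}+\delta_s+\gamma}{1+\gamma}\le 4\gamma\le\sqrt{4\gamma}$ under $\gamma\le 1/4$, closing (ii). Finally (iii) follows from (i) and (ii) via $\|\d_{t+1}\|_1\le\sqrt{|W|}\,\|\d_{t+1}\|_2$, or more sharply from the cone inequality $\|(\d_{t+1})_{\Sb_*}\|_1\le\|(\d_{t+1})_{\S_*}\|_1$ furnished by the same optimality condition.

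The step I expect to be most delicate is the support bound (i). The counting argument cleanly limits the \emph{leakage-driven} survivors outside the current support to at most $s$, but controlling the coordinates already in $\S_t\setminus\S_*$ requires that the off-support part of the error itself contracts geometrically, so that old spurious coordinates drop below the shrinking threshold $\lambda$ rather than accumulating; this is precisely what forces the scaling $\tau_t\propto(4\gamma)^{(t-1)/2}/\sqrt s$ and is why the restricted orthogonality constant $\theta_{s,s}$, not merely the isometry constant $\delta_s$, must enter $\gamma$. The genuine crux is the coupling between the two bounds -- each iteration's (i) supplies the $3s$-reduction used in (ii), and (ii) supplies the error magnitude used in the threshold comparison of (i) -- together with the careful bookkeeping that distinguishes the $2s$ working-set size from the $3s$ appearing in $\delta_{3s}$.
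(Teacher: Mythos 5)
Your contraction argument for part (ii) is sound and is actually a cleaner route than the paper's: you test the prox optimality condition $\xh_t-\x_{t+1}\in\frac{\tau_t}{1+\gamma}\partial\|\x_{t+1}\|_1$ against $\x_*$ and bound $\langle(I-\frac{1}{1+\gamma}U^{\top}U)\d_t,\d_{t+1}\rangle$ by the restricted operator norm on a $3s$-set, getting the one-line recursion $\|\d_{t+1}\|_2\le 2\gamma\|\d_t\|_2+\tau_t\sqrt{s}/(1+\gamma)\le 4\gamma\Delta_t\le\sqrt{4\gamma}\,\Delta_t$; the paper instead runs a descent-lemma computation on $\L$ with an auxiliary quantity $\Gamma_t$ (its Theorem~3) to reach the same $4\gamma\Delta_t^2$ bound. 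Both hinge on the same two inputs ($|\S_*\cup\S_t\cup\S_{t+1}|\le 3s$ and $\|\x_*\|_1-\|\x_{t+1}\|_1\le\sqrt{s}\|\d_{t+1}\|_2$), so this difference is stylistic, not substantive. (Minor: your $\ell_1$ bound in (iii) comes out as $\sqrt{2s}\|\d_{t+1}\|_2$ rather than $\sqrt{s}\|\d_{t+1}\|_2$, and the ``cone inequality'' you invoke does not follow cleanly from the optimality condition because of the cross term $2\gamma\|\d_t\|_2\|\d_{t+1}\|_2$; the paper never proves (iii) explicitly either.)

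The genuine gap is in the support bound (i), and it is fatal to the induction as you have structured it. Your counting argument controls only the coordinates in $\Sb_*\setminus\S_t$, where $[\xh_t]_i$ is pure ``leakage'' $-\frac{1}{1+\gamma}U_{\{i\}}^{\top}U_T\d_t$; restricted orthogonality then caps the survivors there at $s$. But you give no bound on how many of the up-to-$s$ old spurious coordinates in $\S_t\setminus\S_*$ survive thresholding -- on those coordinates $[\xh_t]_i$ contains the non-small ``signal'' term $[\x_t]_i$, so nothing prevents all of them from surviving. Adding the two contributions yields only $|\S_{t+1}\setminus\S_*|\le 2s$, i.e.\ $|\S_{t+1}\cup\S_*|\le 3s$ and a working set that can grow by $s$ per iteration, which destroys the $\delta_{3s}$ reduction your step (ii) relies on. You flag this as the delicate point and offer the heuristic that old coordinates ``drop below the shrinking threshold,'' but that is not the mechanism and you do not prove it. The paper's Theorem~2 closes exactly this hole by \emph{not} separating signal from leakage: for an arbitrary size-$s$ subset $\S'\subseteq\Sb_*$ it bounds $\|[(1+\gamma)\x_t-U^{\top}U\d_t]_{\S'}\|_2\le(\theta_{s,s}+\delta_s+\gamma)\|\d_t\|_2$, and on the portion $\S'_1=\S'\cap(\S_t\setminus\S_*)$ it uses the near-cancellation $U_{\S'_1}^{\top}U_{\S_t^a}[\x_t]_{\S_t^a}\approx(1+\gamma)[\x_t]_{\S'_1}$ guaranteed by $\|U_{\S^a_t}^{\top}U_{\S^a_t}-I\|_2\le\delta_s$. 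With that single $\ell_2$ bound over every size-$s$ subset, the top-$s$ pigeonhole shows at most $s$ coordinates of all of $\Sb_*$ exceed $\tau_t$, covering old and new spurious coordinates in one shot. You need this combined bound (or an equivalent), not the signal/leakage split, to make (i) -- and hence the whole induction -- go through.
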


\section{Analysis}
Before presenting our analysis, we introduce a few notations that will be used throughout the paper. Given a set $\S \subseteq [d]$, we denote $[\x]_{\S}$ the vector that only includes the entries of $\x$ in the subset $\S$. Given two subsets $\A \subseteq [d]$ and $\B \subseteq [d]$, we denote by $[M]_{\A, \B}$ a sub-matrix that includes all the entries $(i,j)$ in matrix $M$ with $i \in \A$ and $j \in \B$. 
We first prove the following Theorem. 
\begin{thm} \label{thm:sabc}
Let $\S_t$ be the support set of $\x_t$ and $\S_*$ be the support set of $\x_*$. Define $\S^c_t = \S_t\cup \S_*$, $\S^a_{t} = \S_c \setminus \S_*$.  If we assume $|\S_t\cup \S_*|\leq 2s$, at most $s$ entries of $[(1+\gamma)\x_t- U^{\top}U(\x_t-\x_*)]_{\Sb_*}$ with magnitude larger than
$
      \displaystyle   \frac{\theta_{s,s}+\theta_{s,s} + \gamma}{\sqrt{s}}\|\x_t - \x_*\|_2. 
$. 
\end{thm}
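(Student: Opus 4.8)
Abbreviate $\d=\x_t-\x_*$ and let $\v=(1+\gamma)\x_t-U^{\top}U\d$ denote the vector whose $\Sb_*$-restriction is analysed; write $c=\theta_{s,s}+\delta_s+\gamma$ for the numerator of the stated threshold. Since $[\x_*]_{\Sb_*}=0$ we have $[\d]_{\Sb_*}=[\x_t]_{\Sb_*}$, and hence
\[
[\v]_{\Sb_*}=(1+\gamma)[\d]_{\Sb_*}-[U^{\top}U\d]_{\Sb_*}=\gamma[\d]_{\Sb_*}+[(I-U^{\top}U)\d]_{\Sb_*}.
\]
The plan is to first convert the counting claim into a subset $\ell_2$ bound, then prove that bound. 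For the conversion it suffices to show $\|[\v]_T\|_2\le c\,\|\d\|_2$ for every $T\subseteq\Sb_*$ with $|T|\le s$: were there more than $s$ coordinates of $[\v]_{\Sb_*}$ of magnitude exceeding $c\|\d\|_2/\sqrt{s}$, selecting $s$ of them as $T$ would force $\|[\v]_T\|_2^2>s\,(c/\sqrt{s})^2\|\d\|_2^2=c^2\|\d\|_2^2$, contradicting the bound.

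Fix such a $T$ and a unit vector $\u$ supported on $T$; I would estimate $\langle[\v]_T,\u\rangle$ and take the supremum over $\u$. Here $\d$ is supported on $\S^c_t=\S_*\sqcup\S^a_t$ with $|\S^c_t|\le 2s$, where $\S^a_t=\S^c_t\setminus\S_*=\S_t\setminus\S_*$, and I split $T=T_1\sqcup T_2$ with $T_1=T\cap\S^a_t$ and $T_2=T\setminus\S^a_t$. Expanding $\langle(I-U^{\top}U)\d,\u\rangle=\langle\d,\u\rangle-\langle U\d,U\u\rangle$ and grouping by which block of $\S^c_t$ supplies the input, the terms coming from genuinely disjoint index sets, namely $\S_*$ against $T$ and $\S^a_t$ against $T_2$, are each controlled by the restricted orthogonality constant, contributing $\theta_{s,s}\|[\d]_{\S_*}\|_2$ and $\theta_{s,s}\|[\d]_{\S^a_t}\|_2\,\|[\u]_{T_2}\|_2$. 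The one remaining interaction is $\S^a_t$ against $T_1\subseteq\S^a_t$, and this is exactly where the identity in $I-U^{\top}U$ earns its keep: the term $\langle\d,\u\rangle$ lives on $T_1$ and cancels the diagonal of the Gram block, so that piece equals $[(I-U_{\S^a_t}^{\top}U_{\S^a_t})\d]_{T_1}$, bounded by $\|I-U_{\S^a_t}^{\top}U_{\S^a_t}\|_2\,\|[\d]_{\S^a_t}\|_2\le\delta_s\|[\d]_{\S^a_t}\|_2$ via the $s$-restricted isometry property.

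To assemble I would keep the computation inside a single supremum rather than splitting prematurely. Since $[\d]_T$ is supported on $T_1\subseteq\S^a_t$, the leading $\gamma[\d]_{\Sb_*}$ term contributes only $\gamma\|[\d]_{\S^a_t}\|_2\,\|[\u]_{T_1}\|_2$, so the $(\gamma+\delta_s)$- and $\theta_{s,s}$-weighted pieces pair with $\|[\u]_{T_1}\|_2$ and $\|[\u]_{T_2}\|_2$ respectively. Cauchy--Schwarz in $(\|[\u]_{T_1}\|_2,\|[\u]_{T_2}\|_2)$ and then in $(\|[\d]_{\S_*}\|_2,\|[\d]_{\S^a_t}\|_2)$ gives $\|[\v]_T\|_2\le\sqrt{2\theta_{s,s}^2+(\gamma+\delta_s)^2}\,\|\d\|_2$. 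Finally $\gamma\ge\theta_{s,s}$ (indeed $\gamma\ge\theta_{s,s}+\delta_s$ by its definition) yields $(\theta_{s,s}+\delta_s+\gamma)^2-\bigl(2\theta_{s,s}^2+(\gamma+\delta_s)^2\bigr)=\theta_{s,s}\bigl(2(\gamma+\delta_s)-\theta_{s,s}\bigr)\ge0$, so the radical is at most $c$, establishing $\|[\v]_T\|_2\le c\|\d\|_2$.

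The step I expect to be the crux is the interaction on $T_1=T\cap\S^a_t$: because $T$ and $\S^a_t$ both lie in $\Sb_*$, restricted orthogonality does not apply there, and one must isolate the diagonal Gram block and exploit its cancellation with the identity so that only the $\delta_s$-deviation survives. The secondary hazard is purely bookkeeping: the constant closes to the advertised $\theta_{s,s}+\delta_s+\gamma$ only if the $\gamma$-, $\delta_s$-, and $\theta_{s,s}$-weights are combined through the single-supremum Cauchy--Schwarz above and the inequality $\gamma\ge\theta_{s,s}$ is invoked, whereas a cruder triangle-inequality split would leave a spurious extra $\theta_{s,s}$. As a sanity check one may instead bound the residual $[(I-U^{\top}U)\d]_T$ in one stroke by the $3s$-restricted isometry constant, using that $T\cup\S^c_t$ has size at most $3s$ and $\gamma\ge\delta_{3s}$.
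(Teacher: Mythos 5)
Your proof follows the same skeleton as the paper's: reduce the counting claim to a uniform $\ell_2$ bound over subsets $T\subseteq\Sb_*$ with $|T|\le s$, split $T$ into $T_1=T\cap\S^a_t$ and $T_2=T\setminus\S^a_t$, control the $\S_*$--$T$ and $\S^a_t$--$T_2$ interactions by the restricted orthogonality constant, and absorb the $\S^a_t$--$T_1$ block together with the $\gamma$ term into $\|I-U_{\S^a_t}^{\top}U_{\S^a_t}\|_2\le\delta_s$; the conversion from the subset bound to ``at most $s$ entries exceed the threshold'' is also identical. Where you genuinely differ is the final assembly, and your version is the one that actually reaches the advertised constant (which, as you inferred, should read $\theta_{s,s}+\delta_s+\gamma$; the $\theta_{s,s}+\theta_{s,s}+\gamma$ in the statement is a typo). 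The paper adds the three pieces by the triangle inequality and then bounds $\|[\x_t-\x_*]_{\S_*}\|_2$ and $\|[\x_t]_{\S^a_t}\|_2$ each by $\|\x_t-\x_*\|_2$; read literally its last displayed inequality gives $2\theta_{s,s}+\delta_s+\gamma$, which is precisely the ``spurious extra $\theta_{s,s}$'' you warn against, and even the sharper Cauchy--Schwarz applied to that same sum yields $\sqrt{\theta_{s,s}^2+(\theta_{s,s}+\delta_s+\gamma)^2}>\theta_{s,s}+\delta_s+\gamma$. Your single-supremum bookkeeping, with Cauchy--Schwarz first in $(\|[\u]_{T_1}\|_2,\|[\u]_{T_2}\|_2)$ and then in $(\|[\d]_{\S_*}\|_2,\|[\d]_{\S^a_t}\|_2)$, produces the strictly smaller $\sqrt{2\theta_{s,s}^2+(\gamma+\delta_s)^2}$, which closes to $\theta_{s,s}+\delta_s+\gamma$ exactly when $\theta_{s,s}\le 2(\gamma+\delta_s)$. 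The only cost is that this step uses $\gamma\ge\theta_{s,s}$, a hypothesis absent from the statement of the theorem (which allows arbitrary $\gamma>0$) but guaranteed by the parameter choice $\gamma\ge\max(\delta_{3s},\theta_{s,s}+\delta_s)$ in the main theorem, where this result is actually invoked; you should state that hypothesis explicitly. Both your argument and the paper's also quietly use $|\S^a_t|\le s$, which requires $|\S_*|=s$ (or replacing $s$ by $2s$ in the relevant constants); that is inherited from the paper and not a defect of your proposal.
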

\begin{proof}
For any subset $\S' \subset \Sb_*$ of size $s$, let $\S'_1=\S'\cap\S_t^a$ and $\S'_2=\S'\setminus\S_t^a$. We have
\begin{align*}
   & \left\|[U^{\top}U(\x_t-\x_*) ]_{\S'}- (1+\gamma)[\x_t]_{\S'}\right\|_2 =  \left\|U^{\top}_{\S'}U_{\S_*}\left[\x_t - \x_*\right]_{\S_*} + U^{\top}_{\S'}U_{\S^a_t}\left[\x_t\right]_{\S^a_t} - (1+\gamma)[\x_t]_{\S'}\right\|_2 \\
    & \leq \left\|U^{\top}_{\S'}U_{\S_*}\left[\x_t - \x_*\right]_{\S_*}\right\|_2 +\left\|U_{\S'_2}^{\top}U_{\S^a_t}\left[\x_t\right]_{\S^a_t}\right\|_2  + \left\|U_{\S'_1}^{\top}U_{\S^a_t}\left[\x_t\right]_{\S^a_t} - (1+\gamma)[\x_t]_{\S'_1}\right\|_2\\
    &\leq \left\|U^{\top}_{\S'}U_{\S_*}\right\|_2\left\|\left[\x_t - \x_*\right]_{\S_*}\right\|_2 +\left\|U_{\S'_2}^{\top}U_{\S^a_t}\right\|_2\left\|\left[\x_t\right]_{\S^a_t}\right\|_2 +  \left\|U_{\S^a_t}^{\top}U_{\S^a_t}\left[\x_t\right]_{\S^a_t} - (1+\gamma)[\x_t]_{\S^a_t}\right\|_2\\
    & \leq  \theta_{s,s}\|[\x_t - \x_*]_{\S_*}\|_2 + \theta_{s,s}\|[\x_t]_{\S^a_t}\|_2 + (\delta_s+\gamma)\|[\x_t]_{\S^a_t}\|_2 \leq (\theta_{s,s}+ \delta_s + \gamma) \|\x_t - \x_*\|_2
\end{align*}
Since the above inequality holds for any subset $\S' \subseteq \Sb_*$ of size $s$, we form the set $\S'$ by including the largest $s$ entries in absolute value of $[(1+\gamma)\x_t - U^{\top}U(\x_t-\x_*) ]_{\Sb_*}$. Then the smallest absolute value in $[(1+\gamma)\x_t- U^{\top}U(\x_a-\x_b)]_{\S'}$ is bounded by $\displaystyle\frac{\theta_{s,s_a}+\theta_{s,s^c_a}}{\sqrt{s}}$. By the construction of $\S'$,  the smallest entry in $\S'$ is the $s$th largest entry in $[(1+\gamma)\x_t- U^{\top}U(\x_t-\x_*)]_{\Sb_*}$, we conclude that at most $s$ entries with magnitude larger than $ \displaystyle \frac{\theta_{s,s}+ +\delta_s+\gamma}{\sqrt{s}}\|\x_t- \x_*\|_2$.
\end{proof}
As an immediate result of Theorem~\ref{thm:sabc}, we prove the following Corollary. 
\begin{cor}\label{cor:xt}
Let $S_t$ be the support set of $\x_t$ and $\S_*$ be the support set of $\x_*$. If $|S_t\cup S_*|\leq 2s$ and $\tau_t\geq \frac{\theta_{s,s}+\delta_s+\gamma}{\sqrt{s}}\|\x_t-\x_*\|_2$, then  $|\S_{t+1}\cup \S_*|\leq 2s$ and $|\S_*\cup \S_t\cup \S_{t+1}|\leq 3s$. 
\end{cor}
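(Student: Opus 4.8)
The plan is to read off the support of $\x_{t+1}$ directly from the soft-thresholding formula~(\ref{eqn:xt}) and then count how many of its coordinates can lie outside $\S_*$. Since $\x_{t+1} = sign(\xh_t)\left[|\xh_t| - \tau_t/(1+\gamma)\right]_+$ with $\xh_t = \x_t - \frac{1}{1+\gamma}U^{\top}U(\x_t - \x_*)$, a coordinate $i$ survives the thresholding, i.e. $i \in \S_{t+1}$, if and only if $|[\xh_t]_i| > \tau_t/(1+\gamma)$. Clearing the factor $1+\gamma$, this is equivalent to $\left|[(1+\gamma)\x_t - U^{\top}U(\x_t - \x_*)]_i\right| > \tau_t$. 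This identity is the crux: it converts the support-counting problem into a statement about how many entries of the vector $(1+\gamma)\x_t - U^{\top}U(\x_t - \x_*)$ exceed $\tau_t$ in magnitude, which is exactly the quantity controlled by Theorem~\ref{thm:sabc}.

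Next I would bound $|\S_{t+1} \cap \Sb_*|$. Restricting the characterization above to coordinates in $\Sb_*$, an index $i \in \Sb_*$ lies in $\S_{t+1}$ only if $\left|[(1+\gamma)\x_t - U^{\top}U(\x_t - \x_*)]_i\right| > \tau_t$. By hypothesis $\tau_t \geq \frac{\theta_{s,s}+\delta_s+\gamma}{\sqrt{s}}\|\x_t - \x_*\|_2$, so every such index also exceeds the threshold appearing in Theorem~\ref{thm:sabc}. Since $|\S_t \cup \S_*| \leq 2s$, that theorem applies and guarantees that at most $s$ entries of $[(1+\gamma)\x_t - U^{\top}U(\x_t - \x_*)]_{\Sb_*}$ have magnitude larger than this threshold. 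Hence $|\S_{t+1} \cap \Sb_*| \leq s$.

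With this count both conclusions follow by elementary set arithmetic. Because $\S_{t+1} \setminus \S_* = \S_{t+1} \cap \Sb_*$ and $|\S_*| \leq s$, we get $|\S_{t+1} \cup \S_*| = |\S_*| + |\S_{t+1} \cap \Sb_*| \leq s + s = 2s$, which is the first claim. For the second, write $\S_* \cup \S_t \cup \S_{t+1} = (\S_* \cup \S_t) \cup \S_{t+1}$ and note that the indices newly contributed by $\S_{t+1}$ satisfy $\S_{t+1} \setminus (\S_* \cup \S_t) \subseteq \S_{t+1} \setminus \S_* = \S_{t+1} \cap \Sb_*$, a set of size at most $s$. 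Combined with $|\S_* \cup \S_t| \leq 2s$ this yields $|\S_* \cup \S_t \cup \S_{t+1}| \leq 2s + s = 3s$.

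The only delicate point is the translation carried out in the first paragraph: one must correctly convert the nonzero pattern of the soft-thresholded iterate into the magnitude condition on $(1+\gamma)\x_t - U^{\top}U(\x_t - \x_*)$ so that, after clearing $1+\gamma$, the threshold matches the one in Theorem~\ref{thm:sabc} exactly. Once this alignment is in place, the remainder is a direct invocation of the theorem together with counting, and I do not anticipate any further obstacle.
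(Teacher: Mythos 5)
Your proposal is correct and follows essentially the same route as the paper: rewrite the soft-thresholding update so that membership in $\S_{t+1}$ becomes the condition $\left|[(1+\gamma)\x_t - U^{\top}U(\x_t-\x_*)]_i\right| > \tau_t$, invoke Theorem~\ref{thm:sabc} to conclude that at most $s$ such coordinates lie in $\Sb_*$, and finish by counting. Your write-up is in fact slightly more explicit than the paper's about the final set arithmetic, but the argument is the same.
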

\begin{proof}
As shown in~(\ref{eqn:xt}), $\x_{t+1}$ is given by 
\begin{align*}
\x_{t+1} = sign(\xh_t)\frac{1}{1+\gamma}\left[\left|(1+\gamma)\x_t - \nh\L(\x_t)\right| - \tau_t\right]_+
\end{align*}
By Theorem~\ref{thm:sabc}, we know that there are at most $s$ entries in $\left|\left[(1+\gamma)\x_t - \nh\L(\x_t)\right]_{\Sb_*}\right|$ are larger than $(\theta_{s,s}+\delta_s + \gamma)\|\x_t-\x_*\|_2/\sqrt{s}$, therefore $[\x_{t+1}]_{\Sb_*}$ has at most $s$ non-zeros entries. It concludes that $|\S_{t+1}\cup \S_*|\leq 2s$ and $|\S_*\cup \S_t\cup \S_{t+1}|\leq 3s$. 
\end{proof}

\begin{thm} \label{thm:induction}
If we assume $\|\x_t- \x_*\|_2^2\leq \Delta_t^2= (4\gamma)^{t-1}R^2$,  set  $\displaystyle\tau_t = \frac{\theta_{s,s}+\delta_s +\gamma}{\sqrt{s}}\Delta_t$ and $\gamma\geq \max(\delta_{3s}, \theta_{s,s}+\delta_s)$,  then we have
\[
    \|\x_{t+1} - \x_*\|_2^2 \leq \Delta^2_{t+1} = (4\gamma)^tR^2
\]
\end{thm}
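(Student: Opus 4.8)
The plan is to carry out a single inductive step: assuming $\|\x_t - \x_*\|_2 \le \Delta_t$ together with the support bound $|\S_t\cup\S_*|\le 2s$ (which is maintained by the surrounding induction of Theorem~\ref{thm:main}, part (i)), I would show the error contracts to $\Delta_{t+1} = \sqrt{4\gamma}\,\Delta_t$. First I would fix the support budget. Since $\tau_t = \frac{\theta_{s,s}+\delta_s+\gamma}{\sqrt s}\Delta_t \ge \frac{\theta_{s,s}+\delta_s+\gamma}{\sqrt s}\|\x_t-\x_*\|_2$, Corollary~\ref{cor:xt} applies and yields $|\S_*\cup\S_t\cup\S_{t+1}|\le 3s$. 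Writing $T = \S_*\cup\S_t\cup\S_{t+1}$, both $\a := \x_t-\x_*$ and $\b := \x_{t+1}-\x_*$ are supported on $T$ with $|T|\le 3s$. This is precisely what licenses the restricted-isometry polarization inequality $|\langle U\a, U\b\rangle - \langle\a,\b\rangle| \le \delta_{3s}\|\a\|_2\|\b\|_2$ (obtained from the parallelogram identity applied to $\a\pm\b$, each $3s$-sparse, followed by a scaling normalization), which is the only place the RIP constants enter.

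Next I would exploit that $\x_{t+1}$ minimizes the $(1+\gamma)$-strongly convex objective $F(\x) = \tau_t\|\x\|_1 + \langle\x-\x_t,\nh\L(\x_t)\rangle + \frac{1+\gamma}{2}\|\x-\x_t\|_2^2$, so that $\frac{1+\gamma}{2}\|\x_{t+1}-\x_*\|_2^2 \le F(\x_*)-F(\x_{t+1})$. Expanding the right-hand side, the quadratic terms collapse algebraically (using $\x_{t+1}-\x_t = \b-\a$) to $\frac{1+\gamma}{2}(2\langle\a,\b\rangle-\|\b\|_2^2)$; the gradient term $\langle\x_*-\x_{t+1},\nh\L(\x_t)\rangle$ equals $-\langle U\a,U\b\rangle = -\langle\a,\b\rangle - E$ with $|E|\le\delta_{3s}\|\a\|_2\|\b\|_2$; and the regularizer contributes $\tau_t(\|\x_*\|_1-\|\x_{t+1}\|_1)$. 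Collecting everything and moving $-\frac{1+\gamma}{2}\|\b\|_2^2$ to the left gives the clean estimate $(1+\gamma)\|\b\|_2^2 \le \tau_t(\|\x_*\|_1-\|\x_{t+1}\|_1) + (\gamma+\delta_{3s})\|\a\|_2\|\b\|_2$.

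The step I expect to matter most is the sparsity bound on the regularizer gap, since it is what forces the correct constant. Splitting over $\S_*$ and $\Sb_*$, the reverse triangle inequality on $\S_*$ together with discarding the nonnegative $\Sb_*$-mass give $\|\x_*\|_1-\|\x_{t+1}\|_1 \le \|[\b]_{\S_*}\|_1 \le \sqrt s\,\|\b\|_2$ — the crucial point being the factor $\sqrt s$ rather than the $\sqrt{|T|}$ that a crude subgradient bound on $\langle\partial\|\x_{t+1}\|_1,\b\rangle$ would produce; the looser factor would fail to close the induction.

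Dividing through by $\|\b\|_2$, substituting $\tau_t\sqrt s = (\theta_{s,s}+\delta_s+\gamma)\Delta_t$ and $\|\a\|_2\le\Delta_t$, and using $\gamma\ge\max(\delta_{3s},\theta_{s,s}+\delta_s)$ to bound both $\theta_{s,s}+\delta_s+\gamma$ and $\gamma+\delta_{3s}$ by $2\gamma$, yields $(1+\gamma)\|\b\|_2 \le 4\gamma\Delta_t$. I would then close the proof by the elementary inequality $\frac{4\gamma}{1+\gamma}\le 2\sqrt\gamma$, equivalent to $(1-\sqrt\gamma)^2\ge 0$, which gives $\|\x_{t+1}-\x_*\|_2 \le \sqrt{4\gamma}\,\Delta_t = \Delta_{t+1}$ as claimed.
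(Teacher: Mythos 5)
Your proof is correct and rests on the same pillars as the paper's: Corollary~\ref{cor:xt} to get $|\S_*\cup\S_t\cup\S_{t+1}|\le 3s$ and hence a $\delta_{3s}$ control of the gradient error, the optimality-plus-strong-convexity inequality for the proximal objective, and the bound $\|\x_*\|_1-\|\x_{t+1}\|_1\le\sqrt{s}\,\|\x_{t+1}-\x_*\|_2$ obtained by splitting over $\S_*$ and $\Sb_*$. Where you differ is in the bookkeeping, and your version is cleaner. The paper routes the argument through the exact quadratic expansion of $\L(\x_{t+1})$, applies Young's inequality twice (producing the $\delta_{3s}^2/\gamma$ and $(\theta_{s,s}+\delta_s+\gamma)^2$ terms) and picks up an extra $\delta_{2s}\|\x_t-\x_*\|_2^2$ cross term, arriving at a quadratic inequality whose right-hand side is bounded by $(4\gamma+4\gamma^2)\Delta_t^2/(1+\gamma)=4\gamma\Delta_t^2$. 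You instead expand $F(\x_*)-F(\x_{t+1})$ directly in $\a=\x_t-\x_*$ and $\b=\x_{t+1}-\x_*$, use a single restricted polarization bound $|\langle U\a,U\b\rangle-\langle\a,\b\rangle|\le\delta_{3s}\|\a\|_2\|\b\|_2$, and obtain an inequality linear in $\|\b\|_2$ that can be divided through; this avoids Young's inequality and the $\delta_{2s}$ term entirely, and yields the marginally sharper contraction $\|\b\|_2\le\frac{4\gamma}{1+\gamma}\Delta_t\le\sqrt{4\gamma}\,\Delta_t$. The only caveats are minor: you should dispose of the trivial case $\x_{t+1}=\x_*$ before dividing by $\|\b\|_2$, and, as you already observe, the hypothesis $|\S_t\cup\S_*|\le 2s$ needed to invoke Corollary~\ref{cor:xt} must be supplied by the outer induction of Theorem~\ref{thm:main} (the paper's statement of this theorem leaves that hypothesis implicit as well).
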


\begin{proof}
Let $\mathcal T=\S_*\cup \S_t\cup \S_{t+1}$, by Corollary~\ref{cor:xt}, we have $|\mathcal T|\leq 3s$, therefore $\|U^{\top}_{\mathcal T}U_{\mathcal T}- I\|_2\leq \delta_{3s}$. Next, we proceed the proof as follows: 
\begin{align*}
\L(\x_{t+1})& = \L(\x_t) + \langle \x_{t+1} - \x_t, \nabla \L(\x_t) \rangle + \frac{1}{2}\|\x_{t+1} - \x_t\|_2^2 \\
& =  \L(\x_t) + \langle \x_{t+1} - \x_t, \nh \L(\x_t) \rangle + \langle \x_{t+1} - \x_t, \nabla \L(\x_t) - \nh \L(\x_t) \rangle + \frac{1}{2}\|\x_{t+1} - \x_t\|_2^2 \\
& \leq  \L(\x_t) + \langle \x_{t+1} - \x_t, \nh \L(\x_t) \rangle + \langle \x_{t+1} - \x_t, (I - U^{\top}U)(\x_t-\x_*) \rangle + \frac{1}{2}\|\x_{t+1} - \x_t\|_2^2\\
& \leq  \L(\x_t) + \langle \x_{t+1} - \x_t, \nh \L(\x_t) \rangle +   \|I - U_{\mathcal T}^{\top}U_{\mathcal T}\|_2\|\x_{t+1} - \x_t\|_2\|\x_t-\x_*\|_2 + \frac{1}{2}\|\x_{t+1} - \x_t\|_2^2\\
& \leq  \L(\x_t) + \langle \x_{t+1} - \x_t, \nh \L(\x_t) \rangle +  \delta_{3s}\|\x_{t+1} - \x_t\|_2\|\x_t-\x_*\|_2 + \frac{1}{2}\|\x_{t+1} - \x_t\|_2^2\\
& \leq  \L(\x_t) + \langle \x_{t+1} - \x_t, \nh \L(\x_t) \rangle + \frac{1 + \gamma}{2}\|\x_{t+1} - \x_t\|_2^2 + \frac{\delta_{3s}^2}{2\gamma}\|\x_t - \x_*\|_2^2\\
&\leq  \L(\x_t) + \langle \x_{t+1} - \x_t, \nh \L(\x_t) \rangle + \tau_t\|\x_{t+1}\|_1  + \frac{1 + \gamma}{2} \|\x_{t+1} - \x_t\|_2^2 \\
&+ \frac{\delta_{3s}^2}{2\gamma}\|\x_t - \x_*\|_2^2 -  \tau_t\|\x_{t+1}\|_1\\
&\leq  \L(\x_t) + \langle \x_* - \x_t, \nh \L(\x_t) \rangle + \tau_t\|\x_*\|_1  + \frac{1 + \gamma}{2} \|\x_* - \x_t\|_2^2 - \frac{1+\gamma}{2}\|\x_{t+1}-\x_*\|_2^2\\
&+ \frac{\delta_{3s}^2}{2\gamma}\|\x_t - \x_*\|_2^2 -  \tau_t\|\x_{t+1}\|_1 \quad (\text{By optimality of $\x_{t+1}$ and the strong conveity})
\end{align*}
Define
\[
\Gamma_t = \frac{\gamma + \delta_{3s}^2/\gamma}{2}\|\x_* - \x_t\|_2^2 + \tau_t\left(\|\x_*\|_1 - \|\x_{t+1}\|_1\right) + \langle \x_{*} - \x_t, \nh \L(\x_t) - \nabla \L(\x_t) \rangle - \frac{1 + \gamma}{2}\|\x_{t+1} - \x_*\|_2^2
\]
We have
\begin{align*}
\L(\x_{t+1}) & \leq   \L(\x_t) + \langle \x_* - \x_t, \nabla \L(\x_t) \rangle + \frac{1}{2}\|\x_* - \x_t\|_2^2 + \Gamma_t = \L(\x_*) + \Gamma_t = \Gamma_t
\end{align*}
where last equality follows from $\L(\x_*) = 0$. Next, we bounded $\Gamma_t$ by
\begin{align*}
 \Gamma_t  &= \frac{\gamma + \delta_{3s}^2/\gamma}{2}\|\x_* - \x_t\|_2^2 + \tau_t\left(\|\x_*\|_1 - \|\x_{t+1}\|_1\right) + \langle \x_{*} - \x_t, \nh \L(\x_t) - \nabla \L(\x_t) \rangle \\
 &\hspace*{0.2in}-  \frac{1+\gamma}{2}\|\x_{t+1} - \x_*\|_2^2 \\
& \leq  \frac{\gamma + \delta_{3s}^2/\gamma}{2}\Delta_t^2 + \frac{\theta_{s,s}+\delta_s+\gamma}{\sqrt{s}}\Delta_t\sqrt{s}\|\x_{t+1} - \x_*\|_2  + \langle \x_{*} - \x_t, (I-U^{\top}U)(\x_*-\x_t)\rangle \\
 &\hspace*{0.2in}-  \frac{1+\gamma}{2}\|\x_{t+1} - \x_*\|_2^2 \\
& \leq  \left(\frac{\gamma + \delta_{3s}^2/\gamma}{2} + \frac{(\theta_{s,s}+\delta_s+\gamma)^2}{2}\right)\Delta_t^2 + \frac{1}{2}\|\x_{t+1} - \x_*\|_2^2  + \delta_{2s}\|\x_t-\x_*\|^2_{2}-  \frac{1+\gamma}{2}\|\x_{t+1} - \x_*\|_2^2 \\
& \leq  \left(\frac{\gamma + \delta_{3s}^2/\gamma}{2} +\frac{(\theta_{s,s}+\delta_s+\gamma)^2}{2} +\delta_{2s}\right)\Delta_t^2 -  \frac{\gamma}{2}\|\x_{t+1} - \x_*\|_2^2 \\
\end{align*}
Since $\L(\x_{t+1})=\|\x_{t+1}-\x_*\|_2^2/2$,  we have
\[
\frac{1 + \gamma}{2}\|\x_{t+1} - \x_*\|^2 \leq \left(\frac{\gamma + \delta_{3s}^2/\gamma}{2} + \frac{(\theta_{s,s}+\delta_s+\gamma)^2}{2} + \delta_{2s} \right)\Delta_t^2
\]
leading to
\[
\|\x_{t+1} - \x_*\|_2^2 \leq \frac{1}{1 + \gamma}\left(\gamma + \frac{\delta_{3s}^2}{\gamma} + 2\delta_{2s} + (\theta_{s,s} + \delta_s + \gamma)^2\right)\Delta_t^2
\]
Since $\delta_s$ is no-decreasing in $s$, if we assume $\gamma\geq \max(\delta_{3s}, \theta_{s,s} + \delta_s)$, we have
\[
\|\x_{t+1} - \x_*\|_2^2 \leq \frac{4\gamma + 4\gamma^2}{1 + \gamma}\Delta_t^2 \leq 4\gamma\Delta_t^2 = \Delta^2_{t+1}
\]
\end{proof}

%

\bibliography{cs-opt}
\end{document}